\newtheorem{theorem}{Theorem}
\begin{document}
\title{An Efficient Network Solver for Electromagnetic Transient Simulation of Power Systems Based on Hierarchical Inverse Computation and Modification} 

\author{Lu~Zhang,
        Bin~Wang,
        Vivek Sarin,
        Weiping~Shi,
        P.~R.~Kumar,
        and~Le~Xie\\ Texas A\&M University\\College Station, TX, 77843, USA




\thanks{Preferred email for correspondence: Weiping Shi, wshi@tamu.edu.}
}

\maketitle

\begin{abstract}
In both power system transient stability and electromagnetic transient (EMT) simulations, up to 90\% of the computational time is devoted to solve the network equations, i.e., a set of linear equations. Traditional approaches are based on sparse LU factorization, which is inherently sequential. In this paper, EMT simulation is considered and an inverse-based network solution is proposed by a hierarchical method for computing and store the approximate inverse of the conductance matrix. The proposed method can also efficiently update the inverse by modifying only local sub-matrices to reflect changes in the network, e.g., loss of a line. Experiments on a series of simplified 179-bus Western Interconnection demonstrate the advantages of the proposed methods.

\end{abstract}

\begin{IEEEkeywords}
inverse-based network solver, Electromagnetic transient (EMT) simulation, hierarchical approximation.
\end{IEEEkeywords}

\IEEEpeerreviewmaketitle

\section{Introduction}

Transient stability (TS) and electromagnetic transient (EMT) simulations has been utilized in many critical applications in modern power industry, e.g., system dynamic security assessment, device insulation design and validation of control strategy \cite{2010milano:book}\cite{2014Ametani:book}. In practice, simulation speed is always a key aspect of simulation tools, especially for real-time applications and large-scale systems \cite{2017rahman}. As the penetration level of inverter-based generations, such as wind and solar, continues to increase in the upcoming decades, the complexity and dimension of the underlying systems will drastically increase along with more dynamic risks, e.g., sub-synchronous oscillations. This poses a challenge to the inevitable need for fast simulations of large-scale systems \cite{2019kroposki}.

Two major types of dynamic simulation in power industry are TS simulation, and EMT simulation. In both types, solving the network equations, i.e.,  a set of linear equations, is the most time-consuming step, taking up to $90\%$ of the overall computational time \cite{2019LJ:journal}\cite{2017Chen:journal}\cite{2019wang:sas}. The most popular approach is based on LU factorization \cite{2012Davis}\cite{2010Davis}, which is adopted by many commercial simulation tools. However, this approach is sequential in nature, making it difficult to be applied to parallel computing platforms. 

In contrast, the inverse-based approach exhibits great potential in speeding up the network solution by parallel matrix-vector multiplication. Fig. \ref{fig:solver} compares the LU-based and inverse-based approaches where black texts are our previous effort \cite{zhang2019:isgt,Lu}. In \cite{zhang2019:isgt}\cite{Lu}, we explored a fast direct inverse-based network solver, where the solving step outperforms the LU-based approach by $2.8 \times$ for large networks ($>$1000 buses) in runtime, even when sequentially computed. However, it requires significant time and memory to calculate and store the inverse of the large conductance matrix $G$. 

\begin{figure}[t]
	\centering
	\includegraphics[width=0.9\columnwidth]{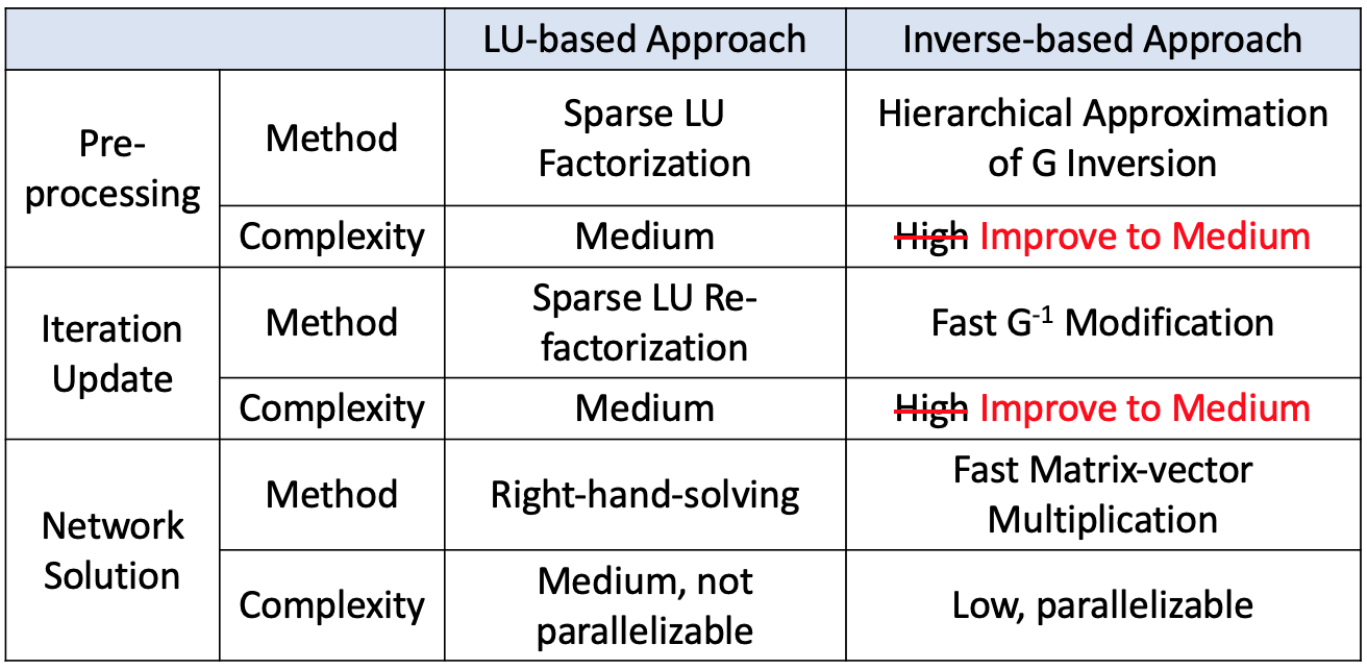}
	\caption{Comparison of LU-based and inverse-based approaches, where red texts are contributions of this paper.}
	\label{fig:solver}
\end{figure}

To further develop the technique of parallelizable network solvers, this paper proposes a fast and memory-efficient hierarchical approximation method for calculating and store the inverse of the large admittance/conductance matrices established from power grids. Different from traditional Diakoptic \cite{Diakoptics} and conventional hierarchical approaches \cite{intro}\cite{matrixInversion}, the proposed method decomposes the entire network into several sub-networks, then implicitly computes matrix inverse by using the inversions of sub-matrices hierarchically. A fast modification approach is further presented based on the inverted matrix by modifying only local entries and sub-matrices to reflect changes in network, e.g., loss of a line. The advantage of the proposed method is demonstrated only on EMT simulation in this paper, while similar results are expected for TS simulation.





The key contributions of this paper are as follows. 1) An efficient hierarchical method for calculating and store $G^{-1}$, which can be used for the inverse based solution of linear equations involved in large-scale power system TS and EMT simulations; 2) A fast and memory-efficient $G^{-1}$ modification approach to handle scenarios where $G$ changes, which may be used for simulating fault and topology changes.

\section{Brief Review of EMT Simulation}

In EMT simulation, the network equations are a set of linear equations, as shown in \eqref{eq:neteq}, which are derived from the nodal formulation and discretized by the numerical integration \cite{1981emtp:book}.


\begin{equation}
	G\mathbf{v}(t) = \mathbf{i}_{\textrm{in}}(t) + \mathbf{i}_{\textrm{his}}(t-\Delta t), \label{eq:neteq}
\end{equation}

\noindent where $G$ is the conductance matrix, $\mathbf{v}(t)$ is the vector of nodal voltages, $\mathbf{i}_{\textrm{in}}(t)$ is the vector of current injections, and $\mathbf{i}_{\textrm{his}}(t-\Delta t)$ is the vector of historical currents, and all values are real. 

The traditional solution to \eqref{eq:neteq} is LU-based which requires sequential forward backward substitute in the solving stage. This paper focuses on a fully parallelizable inverse-based approach which rewrites \eqref{eq:neteq} into a matrix-vector multiplication:
\begin{equation}\label{eq:neteq2}
    \begin{aligned}
        &\mathbf{v}(t) = G^{-1} \mathbf{i}(t),  
        \mathbf{i}(t) =  \mathbf{i}_{\textrm{in}}(t) + \mathbf{i}_{\textrm{his}}(t-\Delta t).
    \end{aligned}
\end{equation}

Our past efforts \cite{Lu} shows that, once $G^{-1}$ is calculated, the network solution time can be reduced from  O($N^2$) to O($N \log N$), where $N$ is the system size, even with sequential computation, and parallelizable schemes may be employed to further speed up the simulation.

\section{Hierarchical Approximation of $G$ Inverse}
Unlike $G$, $G^{-1}$ cannot be formed directly from the network. Even though $G^{-1}$ can be found via inversion of $G$, explicit inversion is very expensive. Thus, this section introduces a hierarchical approximation approach for computing $G^{-1}$ with much lower time and storage requirements.

\subsection{Approximation of $G$ Inverse}



Assuming $G$ and $G^{-1}$ are symmetric. Express $G$ and $G^{-1}$ in a block matrix format according to a partition of network $G$ into sub-network $G_1$ and $G_2$,

\begin{equation}\label{eq:Gaprx1}
    G = \begin{bmatrix} G_{1} & H \\ H^{T} & G_{2}
    \end{bmatrix}
    ~\text{and}~{G}^{-1} = \begin{bmatrix} A & M \\ M^{T} & B
    \end{bmatrix},
\end{equation}
where the sizes of $G_{1}$ is $m \times m$, $G_{2}$ is $n \times n$, $H$ is $m \times n$, and $m/3$, $n/3$ are the number of buses represented by $G_1$ and $G_2$, and each bus has three phases.

To compute the approximate $G^{-1}$, named as $\widetilde{G}^{-1}$, we want to find a matrix making the off-diagonals of $ G  \widetilde{G}^{-1}$ equal to 0, and the diagonals close to the identity matrix. Therefore, we wish to restrict sub-matrices $A$ and $B$ as follows,
\begin{equation}\label{eq:Gaprx2}
\widetilde{G}^{-1} = \begin{bmatrix} G_{1}^{-1} & M \\ M^{T} & G_{2}^{-1} 
\end{bmatrix}.
\end{equation}
We then find $M$ and $M^T$.
\begin{equation}\label{eq:gg}
    \begin{aligned}
        G \times \widetilde{G}^{-1} 
        &= \begin{bmatrix} G_{1} & H \\ H^{T} & G_{2}
        \end{bmatrix} \begin{bmatrix} G_{1}^{-1} & M \\ M^{T} & G_{2}^{-1} \end{bmatrix}\\
        &= \begin{bmatrix} I + MH^{T} & G_{1}^{-1}H + MG_{2} \\ G_{2}^{-1}H^{T} + M^{T}G_{1} & I + M^{T}H
        \end{bmatrix}.
    \end{aligned}
\end{equation}
Forcing the off-diagonals to be 0, we get
\begin{equation}\label{eq:Meq}
    \left\{\begin{matrix}
    G_{1}^{-1}H + MG_{2} = 0,  \\ 
    G_{2}^{-1}H^{T} + M^{T}G_{1} = 0. 
    \end{matrix}\right.
\end{equation}
Thus,
\begin{equation}\label{eq:M}
    M = - G_{1}^{-1} H G_{2}^{-1}.
\end{equation}
Plug \eqref{eq:M} into \eqref{eq:Gaprx2},
\begin{equation}\label{eq:Ginv1}
    \widetilde{G}^{-1} = \begin{bmatrix} G_{1}^{-1} & - G_{1}^{-1} H G_{2}^{-1} \\ - G_{2}^{-1} H^{T} G_{1}^{-1} & G_{2}^{-1} 
    \end{bmatrix}.
\end{equation}

Note that the approximation inverse approach works for both real and complex matrices, while in this paper, real matrix is considered since the conductance matrix involved in nodal formulation based EMT simulation is real-valued.

In power systems, matrix $H_{m\times n}$ in $G$ is determined by the network, i.e., the transmission lines connecting two groups of buses $s_1$ and $s_2$, where $s_1$ contains $m$ buses, and $s_2$ contains $n$ buses. Suppose there are $k$ transmission lines between $s_1$ and $s_2$, then $H = h_{1} h_{2}^{T}$,
where $h_{1}$ is an $m\times k$ matrix representing the interaction from $s_1$ to $s_2$, and $h_{2}$ is an $n\times k$ matrix representing the interaction from $s_2$ to $s_1$. In general, $k$ is much less than $m$ and $n$. We can rewrite $M$ as follows.
\begin{equation}\label{eq:MH}
    M = - (G_{1}^{-1} h_{1}) (G_{2}^{-1} h_{2})^{T}.
\end{equation}
If we multiply the matrices in the order given by \eqref{eq:MH} instead of \eqref{eq:M}, the computation cost will be reduced to $km^2 + kn^2 + kmn$. In addition, minimizing $k$ will further reduce the time.



\subsection{Hierarchical Approximation of $G$ Inverse}

To further reduce the time to compute $G^{-1}$, we propose the hierarchical approximation, which is illustrated in Fig. \ref{fig:HG}. 

\begin{figure}[h]
	\centering
	\includegraphics[width=0.9\columnwidth]{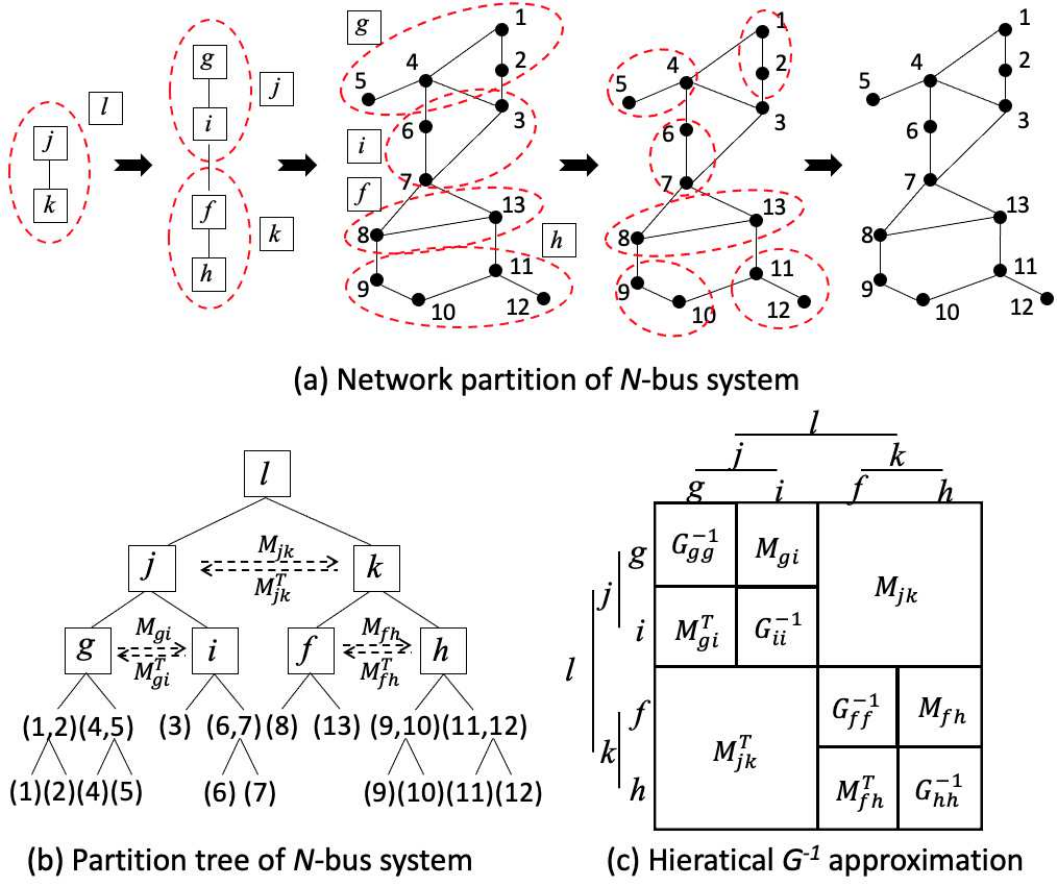} 
	\caption{Hierarchical approximation for calculating $G^{-1}$.}
	\label{fig:HG}
\end{figure}

The process is as described. First, we recursively partition the network into multiple sub-networks. In Fig. \ref{fig:HG} (a), network $l$ is first bi-partitioned into sub-networks $j$ and $k$, then $j$ and $k$ are further bi-partitioned into sub-network $g$, $i$, and $f$, $h$, respectively, so on so forth until each sub-network contains only one bus. The partition thereby builds a binary tree where the root represents the whole network, the internal node represents a group of buses, and the leaf represents an individual bus, shown as Fig. \ref{fig:HG} (b). Correspondingly, the network partition divides $G^{-1}$ into a hierarchy, where each sub-matrix, including $G^{-1}$ itself, is further divided into 4 sub-matrices, until each sub-matrix is a single element. In Fig. \ref{fig:HG} (c), group $j$ consists of 7 buses and group $k$ consists of 6 buses, thus the interaction within $j$ is expressed by a $7 \times 7$ sub-matrix $G_{jj}^{-1}$, where the row and column indices correspond to buses in $j$; the interaction within $k$ is expressed by a $6 \times 6$ sub-matrix $G_{kk}^{-1}$, where the row and column indices correspond to buses in $k$, and the interaction between $j$ and $k$ is expressed by a $7 \times 6$ sub-matrix $M_{jk}$, where the row indices correspond to buses in $j$, and the column indices correspond to buses in $k$. Based on the hierarchically partitioned $G^{-1}$, we apply the approximate inverse approach recursively, eventually construct the entire $G^{-1}$ without any direct matrix inversion, i.e., $G_{jj}^{-1}$ is constructed by $G_{gg}^{-1}$, $G_{ii}^{-1}$ and $M_{gi}$, and $M_{gi}$ is calculated by \eqref{eq:MH}, where $G_{gg}^{-1}$ is a $6 \times 6$ matrix, $G_{ii}^{-1}$ is a $3 \times 3$ matrix, $h_{gi}$ is a $6 \times 3$ matrix, and $h_{ig}$ is a $3 \times 3$ matrix.


To improve the accuracy of the approximation, we pre-define a node threshold $d_{th}$ to limit the size of the minimum group of buses. If a sub-network contains fewer than $d_{th}$ buses, the partition stops. For the example in Fig. \ref{fig:pruning}, we set $d_{th}$ as 4, thus node $g$, $i$, $f$, and $h$ are the minimal sub-network where the partition stops. To achieve high approximation accuracy and low computation cost, the network partition algorithm needs to minimize the edges between two sub-networks. In this paper, we adopted the partition algorithm used in \cite{Lu}.


\begin{figure}[h]
	\centering
	\includegraphics[width=0.95\columnwidth]{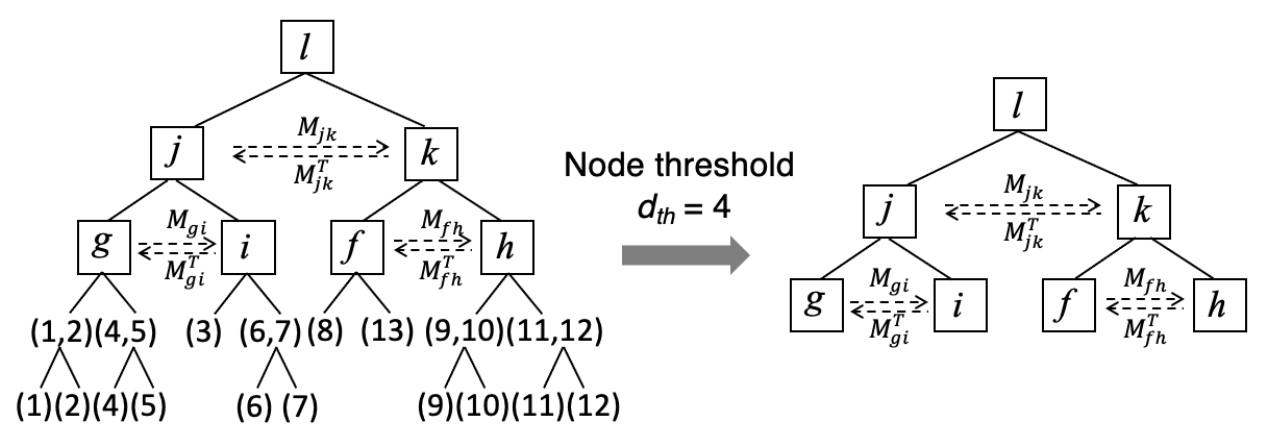} 
	\caption{Partition tree of $G^{-1}$ by node threshold $d_{th}$ = 4.}
	\label{fig:pruning}
\end{figure}

The algorithm is presented below. The input is the root of the partition tree $i$, with its left child being $l$ and right child being $r$. Each tree node stores the bus indices of the sub-network. The outputs are $M_{lr}$ and the approximation of $G^{-1}_{ii}$.

\begin{algorithm}
\SetAlgoLined

\uIf{ i \textup{has less than} $d_{th}$ \textup{buses} }{
    direct compute the $G_{ii}^{-1}$; 
  }
  \Else{
    $G_l^{-1}$ = \text{Hierarchical\_Ginv}($l$);\\
    $G_r^{-1}$ = \text{Hierarchical\_Ginv}($r$);\\
    $H = G_{lr}$;
    Decompose $H = h_l * h_r^T$, where $h_l$ is an $l \times k$ matrix, $h_r$ is an $r \times k$ matrix, $k$ is the number of edges between $l$ and $r$; \\ 
    $M_{lr} = - (G_l^{-1} h_l)(G_r^{-1} h_r)^T$,
    $G_{ii}^{-1}=
    \begin{bmatrix} 
     G_l^{-1} & M_{lr} \\
     M^T_{lr}      & G_r^{-1} 
    \end{bmatrix}$;
  }
\KwRet $M_{lr}, G_{ii}^{-1}$

\caption{Hierarchical\_Ginv($i$)}
\label{alg:hiarGinv}
\end{algorithm}

The benefits of this approach are 1) Save computation of matrix inversion without direct invert $G$. In the algorithm, we only need to direct invert the diagonal matrices of each size bounded by node threshold $d_{th}$; 2) Reduce storage requirement. Instead of storing the entire dense matrix $G^{-1}$, this approach only requires to store diagonal matrices.


\section{Fast $G$ Inverse Modification}
This section introduces a fast $G^{-1}$ modification approach based on Algorithm \ref{alg:hiarGinv} to handle network changes.

\subsection{Modification Algorithm}

Take the network in Section III.B as an example. Suppose a fault occurs between bus 4 and bus 6. Denote $G_{pre}$, $G_{on}$ as conductance matrices in the pre-fault stage and fault-on stage, respectively. The construction of $G_{pre}$, $G_{on}$, $G_{pre}^{-1}$ and $G_{on}^{-1}$ is illustrated in Fig. \ref{fig:app}.

In the pre-fault stage, $G_{pre}$ is constructed according to the network before the fault occurs, shown in Fig. \ref{fig:app} (a). We hierarchically approximate $G_{pre}^{-1}$ by Algorithm \ref{alg:hiarGinv}. Name the approximation of ${G_{pre}^{-1}}$ as $\widetilde{G_{pre}^{-1}}$, shown in Fig. \ref{fig:app} (c).

\begin{figure}[h]
	\centering
	\includegraphics[width=0.7\columnwidth]{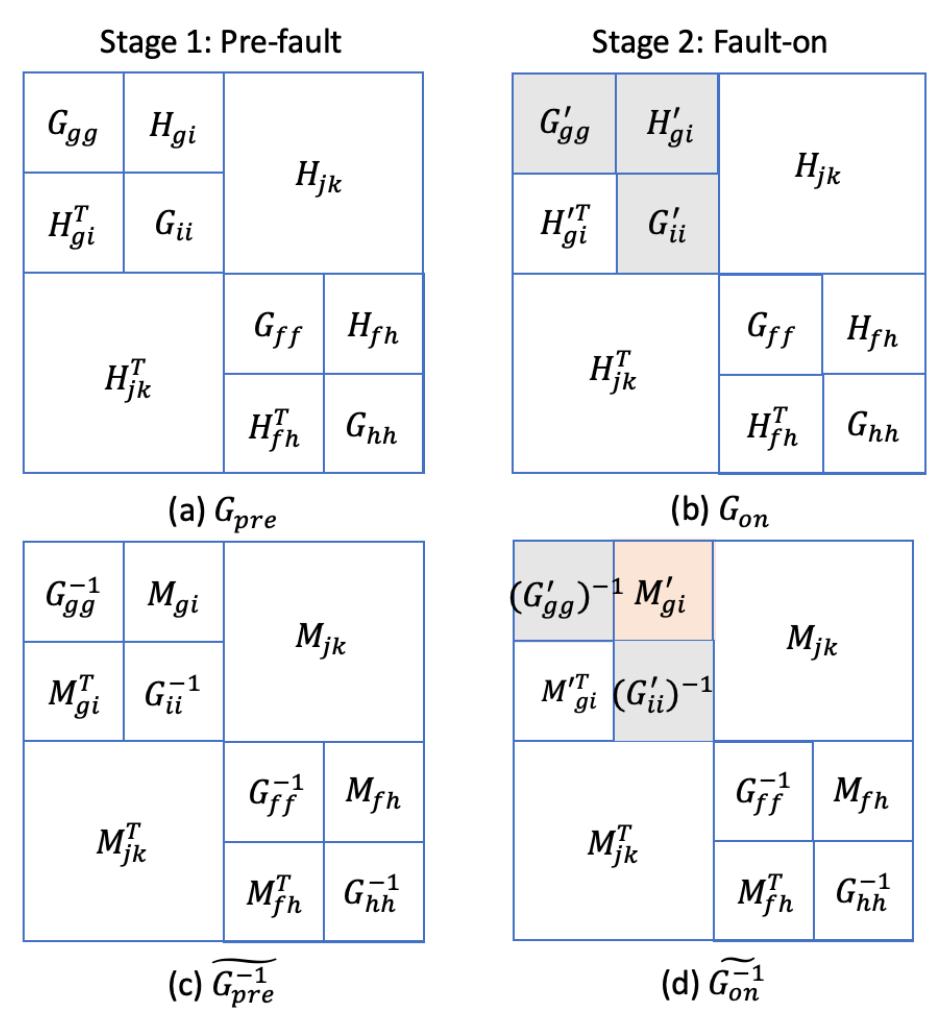}
	\caption{Modification of $G^{-1}$ under a fault between $g$ and $i$.}
	\label{fig:app}
\end{figure}

In the fault-on stage, $G_{pre}$ is updated to $G_{on}$ by modifying corresponding elements in $G_{gg}, G_{ii}$, and $H_{gi}$ due to the fault occurred between bus 4 and bus 6. Rename the modified block matrices as $G_{gg}^{'}, G_{ii}^{'}, H_{gi}^{'}$, shown in the Fig. \ref{fig:app} (b). Since only three blocks are modified, while others stay the same. According to Algorithm \ref{alg:hiarGinv}, $\widetilde{G^{-1}_{on}}$ is formed by modifying $(G_{gg}^{'})^{-1}, (G_{ii}^{'})^{-1}$, and $M_{gi}^{'}$ only, shown in Fig. \ref{fig:app} (d). 

As described, the modification of $G^{-1}$ updates a relative small portion of sub-blocks and avoids re-construct the entire $G^{-1}$. The blocks which need to be updated are easily located by the partition tree. In this example, bus 4 belongs to node $g$, and bus 6 belongs to node $i$. Starting from node $g$ and $i$, we trace up to their ancestors until reaching their lowest common ancestor (LCA) which is node $j$. Thus node $j$ is the smallest group that contains both bus 4 and 6, and the smallest group that is directly influenced by the fault. Therefore, the interaction within node $j$ needs to be modified, thus all matrices that describe interactions between nodes along the paths from $g$ to $j$ and from $i$ to $j$ need to be modified. Since node $g$ and $i$ are leaves, the interaction within node $g$ and $i$ which represented as $(G_{gg}^{'})^{-1}$ and $(G_{ii}^{'})^{-1}$ are computed directly by inversion. The interaction between node $g$ and $i$ represented as $M_{gi}^{'}$ is computed by \eqref{eq:M}. The interaction within node $j$ is approximated by interaction within node $g$, node $i$ and between node $g$ and $i$ by \eqref{eq:Gaprx2}. The tree with modification nodes is shown in Fig. \ref{fig:appT}, where the grey nodes are updated by directly inverse, the orange node is the LCA of grey nodes estimated by combining grey nodes.

\begin{figure}[h]
	\centering
	\includegraphics[width=0.5\columnwidth]{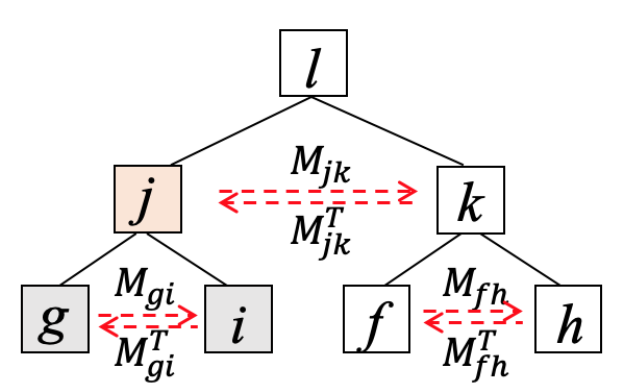} 
	\caption{The updated tree of $G^{-1}$ under a fault.}
	\label{fig:appT}
\end{figure}

The detailed algorithm is presented below. Suppose the network is partitioned and $G^{-1}$ is approximated by Algorithm \ref{alg:hiarGinv} at beforehand. The inputs are node $i$ and $j$, where node $i$ contains the fault bus $i$, node $j$ contains the fault bus $j$.



\begin{algorithm}
\SetAlgoLined

Find the LCA of node $i$ and $j$, name it as node $k$;\\
UpdateM($k$, $i$, $j$); 

\textbf{return} 
\caption{Modification($i, j$)}
\label{alg:modify}


  \SetKwFunction{FMain}{UpdateM}
  \SetKwProg{Fn}{Function}{:}{}
  \Fn{\FMain{$k, i, j$}}{
    \If{k \textup{is a leaf in the partition tree} }{
        \If{ $k$ \textup{is the same as} $i$ \textup{or} $j$}{
            direct compute the $G_{kk}^{-1}$;
        }
    \textbf{return} 
    }
    \text{UpdateM}(left child of $k$, $i$, $j$); \\
    \text{UpdateM}(right child of $k$, $i$, $j$); \\
    
    Update $M_{lr}^{-1}$ by \eqref{eq:MH}, where $l$ and $r$ are the left and right child of $k$, respectively; \\
    \KwRet\;
  }
  
\end{algorithm}





The computation cost of the fast $G^{-1}$ modification approach highly depends on the fault location. In the best case, the LCA of node $i$ and $j$ is itself, so that we don't need to update any $M$ matrices, just update the diagonal matrix locally. In the worst case, the LCA of node $i$ and $j$ is the root, so that besides updating the diagonal matrix locally, we also need to update multiple $M$ matrices from leaf to root.

\subsection{Time Complexity}

\begin{theorem}
Given an $N$-bus power system which is hierarchically partitioned, the time complexity of fast $G^{-1}$ modification algorithm based on hierarchical approximation for $G$ inversion is $O(n^2)$, where $n$ is the smallest group of buses which contain buses directly connected to the fault.
\end{theorem}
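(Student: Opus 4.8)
The plan is to account separately for the three kinds of work that Algorithm~\ref{alg:modify} performs and to show that their sum is dominated by a single term of order $n^2$. First I would use the partition-tree structure to establish a \emph{locality} claim: a fault between bus $i$ and bus $j$ perturbs only the two fault leaves' diagonal blocks and the coupling block between them, so every quantity that actually changes is confined to the subtree rooted at the lowest common ancestor (LCA) $k$ of the two fault leaves. By the theorem's definition of $n$, this subtree spans exactly $n$ buses. Hence the total cost splits into (a) locating the LCA, (b) the direct re-inversions at the two fault leaves, and (c) the recomputation, via \eqref{eq:MH}, of the coupling blocks $M_{lr}$ at the internal nodes of the LCA subtree; the unchanged leaves keep their cached inverses and are skipped, which is exactly why the leaf test in \textup{UpdateM} guards against re-inverting them.

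Next I would bound each piece. Locating the LCA and walking the affected paths costs $O(\log N)$ pointer operations, a lower-order additive term. The two leaf re-inversions are $O(d_{th}^3)=O(1)$ each, since the node threshold $d_{th}$ caps every leaf at a constant size. The substantive term is (c). From the multiplication-order analysis preceding \eqref{eq:MH}, recomputing $M_{lr}$ at a node whose left and right children carry $a$ and $b$ buses and are joined by $k_v$ cut edges costs $O\!\left(k_v(a^2+ab+b^2)\right)=O\!\left(k_v(a+b)^2\right)$. Since $a+b\le n$ and $k_v$ is small by construction (the partitioner minimizes inter-group edges), a single such update is at worst $O(n^2)$, and this worst case is realized precisely at the LCA, whose two children together span all $n$ buses.

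It then remains to sum (c) over the affected internal nodes. Using the balanced partition of \cite{Lu}, block sizes halve from parent to child, so at depth $\ell$ below the LCA there are $O(2^{\ell})$ nodes each of size $O(n/2^{\ell})$, contributing $O(2^{\ell})\cdot O\!\left((n/2^{\ell})^2\right)=O(n^2/2^{\ell})$. Summing the geometric series $\sum_{\ell\ge 0} n^2/2^{\ell}=O(n^2)$ shows the total is dominated by the LCA level. Adding the $O(1)$ leaf inversions and the $O(\log N)$ navigation overhead leaves the bound $O(n^2)$, as claimed.

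The main obstacle I anticipate is justifying the geometric decay of block sizes, i.e.\ the balance of the partition tree. The inequality $a^2+ab+b^2\le(a+b)^2$ alone only shows that each node costs $O(\text{size}^2)$; for a badly skewed tree the nested sizes along a root-to-leaf path could decay arithmetically rather than geometrically, and the sum of squares could degrade to $\Theta(n^3)$. Making the $O(n^2)$ bound rigorous therefore hinges on the balance guarantee of the chosen partitioner (and on each cut size $k_v$ staying $o(n)$). I would state this balance assumption explicitly and then the geometric summation, and hence the final bound, follows routinely.
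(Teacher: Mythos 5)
Your proposal is correct and follows essentially the same route as the paper: your level-by-level geometric sum (with $O(2^{\ell})$ blocks of size $O(n/2^{\ell})$ at depth $\ell$, dominated by the LCA level, plus $O(1)$ leaf re-inversions of size at most $d_{th}$) is exactly the unrolled form of the paper's recurrence $T(n)\leq 2T(n/2)+rn^2$ with $T(n)\leq C$ for $n\leq d_{th}$, both resting on locality to the LCA subtree and the low-rank update cost from \eqref{eq:MH}. Your explicit flagging of the balanced-partition assumption is a fair caveat---the paper encodes it silently in the $2T(n/2)$ term---but it does not change the substance of the argument.
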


\begin{proof}
Suppose the $N$-bus power system network is partitioned into multiple groups, thereby produces a binary tree where the root represents the whole network, the leaves represents small groups of buses where each group contains no more than $d_{th}$ buses. Therefore, the partition tree has $O(N)$ nodes and the height of the tree is $O(\log N)$. 

Consider when a fault occurs between bus $i$ and $j$, which belongs to leaves $i$ and $j$ in the partition tree, respectively. Let node $k$ be the lowest common ancestor of leaves $i$ and $j$ and let the number of buses under node $k$ be $n$. According to Algorithm \ref{alg:modify}, the time complexity of fast $G^{-1}$ modification algorithm only depends on $n$. Let $T(n)$ be the time complexity of fast $G^{-1}$ modification algorithm under a fault, then we have
\begin{equation*}
    T(n) \leq  \left\{\begin{matrix}
    C & \mbox{if $n\leq d_{th}$} \\ 
    2T(\frac{n}{2}) + rn^2 & \mbox{otherwise},
    \end{matrix}\right.
\end{equation*}

\noindent where $C$ is a constant, $T\left( \frac{n}{2} \right)$ is the time complexity of updating diagonal matrices due to the partition, and $rn^2$ is the upper bound on time complexity of updating the off-diagonal matrices $M$, $r$ is the number of edges between two groups.

To solve the recurrence relation, we have
\begin{equation*}
    \begin{aligned}
    T(n) &\leq 2T\left( \frac{n}{2} \right) + rn^2 \\
        &= 2\left(2T\left(\frac{n}{4}\right) + r\left( \frac{n}{2}\right)^2\right) + rn^2
        = \cdots \\ 
        &= \frac{n}{d_{th}}T(d_{th}) + 
        rn^2  \sum_{i=0}^{\log (n/d_{th})} \frac{1}{2^i} \\
        &< \frac{n}{d_{th}} + 2rn^2 = O(n^2).
    \end{aligned}
\end{equation*}

The sum to $\log n/d_{th}$ is due to the fact that the length of the path from node $i$ to $k$, and $j$ to $k$ is at most $\log n/d_{th}$. 
 
\end{proof}

\section{Case Study}
This section presents comprehensive case studies for EMT simulation on $179$-bus based systems to evaluate the performance of the proposed approaches. The data and one-line diagrams of the $179$-bus system can be found in \cite{2016wang:lib}. In the simulation, Algorithm \ref{alg:hiarGinv} is used to approximate $G^{-1}$. If a fault occurs, Algorithm \ref{alg:modify} is used to update $G^{-1}$. For the network solution, the low-rank approximation based approach is adopted to further compress $G^{-1}$, then fast matrix-vector multiplication \cite{Lu} is applied to solve the network equations. 

\subsection{Node threshold v.s. $G^{-1}$ Accuracy}

In this test, $G$ matrix for a 179-bus system is used for evaluating the impact of the node threshold $d_{th}$ on the accuracy of $G^{-1}$. Fig. \ref{fig:nodeERR} shows that as $d_{th}$ increases from 2 to 74, the error exponentially drops from $2.5\times10^{-3}$ to $1.8\times10^{-9}$. However, as $d_{th}$ increases, the computation cost increases. To balance the trade-off between error and computation cost, $d_{th}$ is set to 74 in our test cases.

\begin{figure}[h]
	\centering
	\includegraphics[width=0.7\columnwidth]{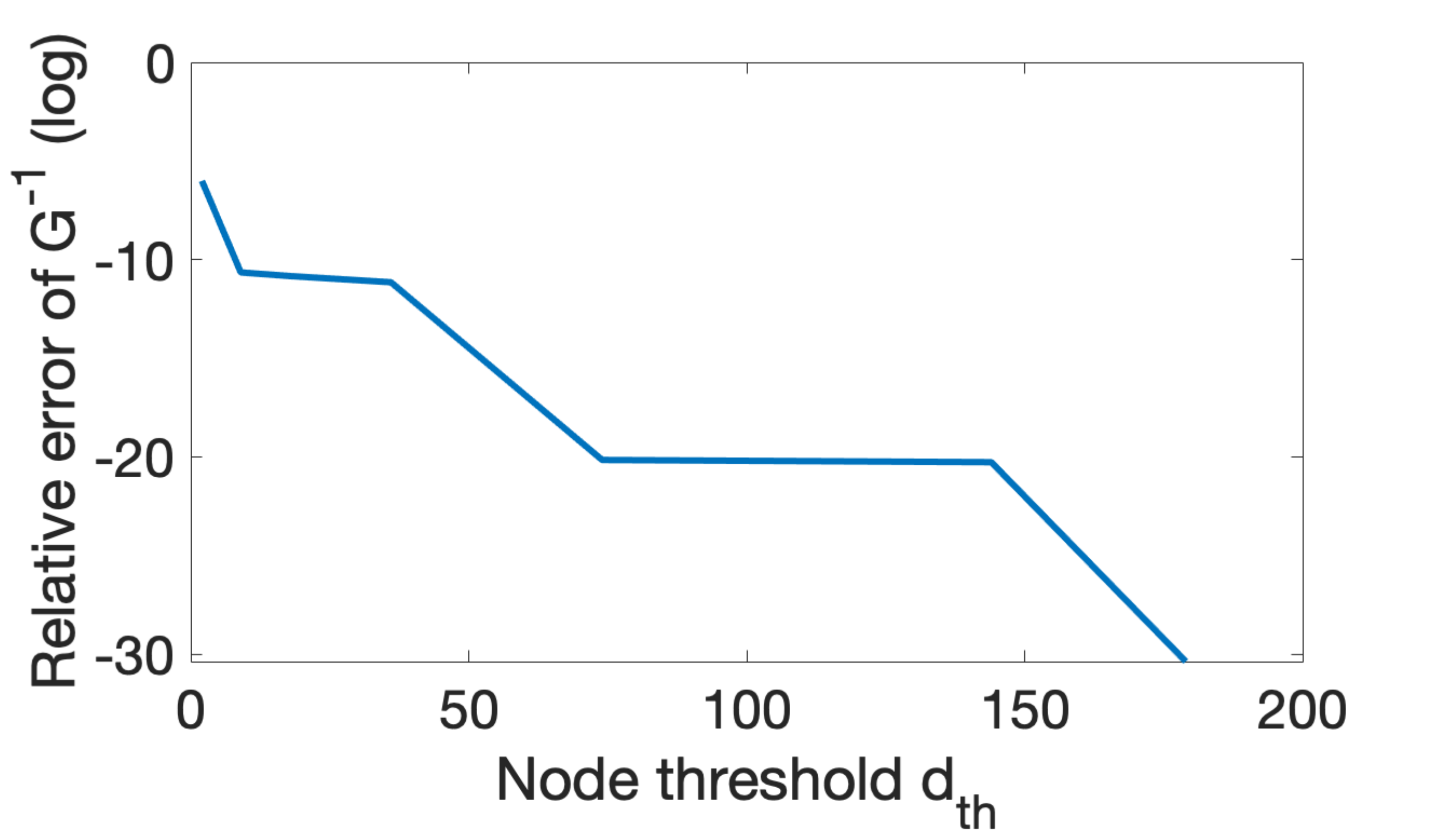} 
	\caption{Relative error of $G^{-1}$ with node threshold $d_{th}$.}
	\label{fig:nodeERR}
\end{figure}

\subsection{CPU time in terms of FLOPS}

In this test, a series of large systems consist of multiple copies of a 179-bus system interconnected as an array. A similar system is used in \cite{song2017:journal}. The FLOPS involved in approximate $G^{-1}$ by Algorithm \ref{alg:hiarGinv} and the network solution by fast matrix-vector multiplication \cite{Lu} at one time step are counted. The FLOPS count includes total numbers of floating point addition and multiplication.

\subsubsection{FLOPS for hierarchical approximation of $G$ inverse}
Traditionally, the time complexity of matrix inverse is O$(N^3)$, where $N$ is the system size. Fig. \ref{fig:gflop} shows that in the series of 179-bus based systems, the FLOPS of Algorithm \ref{alg:hiarGinv} is O$(N^2)$.

\subsubsection{FLOPS for network solution}
Fig. \ref{fig:nflop} shows that in the series of 179-bus based systems with up to $2148$ buses, the maximum FLOPS of LU-based approach is $5.56 \times 10^{6}$, while the fast matrix-vector multiplication \cite{Lu} is $1.43\times10^{6}$, leading to an $74\%$ reduction compared with LU-based approach. 

\begin{figure}[ht]
	\centering
	\subfigure[Flops for $G^{-1}$ computing ]{\label{fig:gflop}
	\includegraphics[width=0.44\columnwidth]{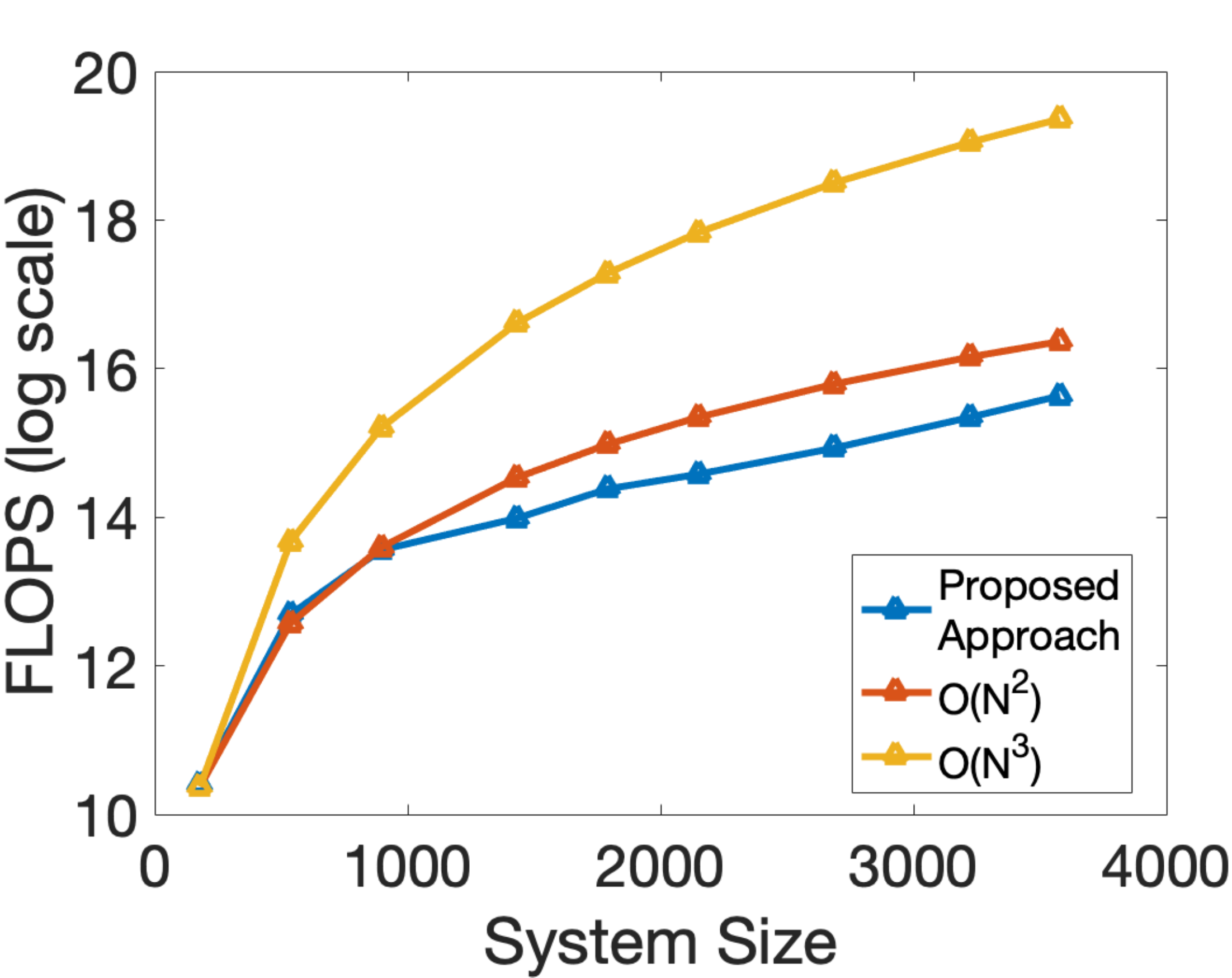}}~~~~
	\subfigure[Flops for network solution]{\label{fig:nflop}
	\includegraphics[width=0.44\columnwidth]{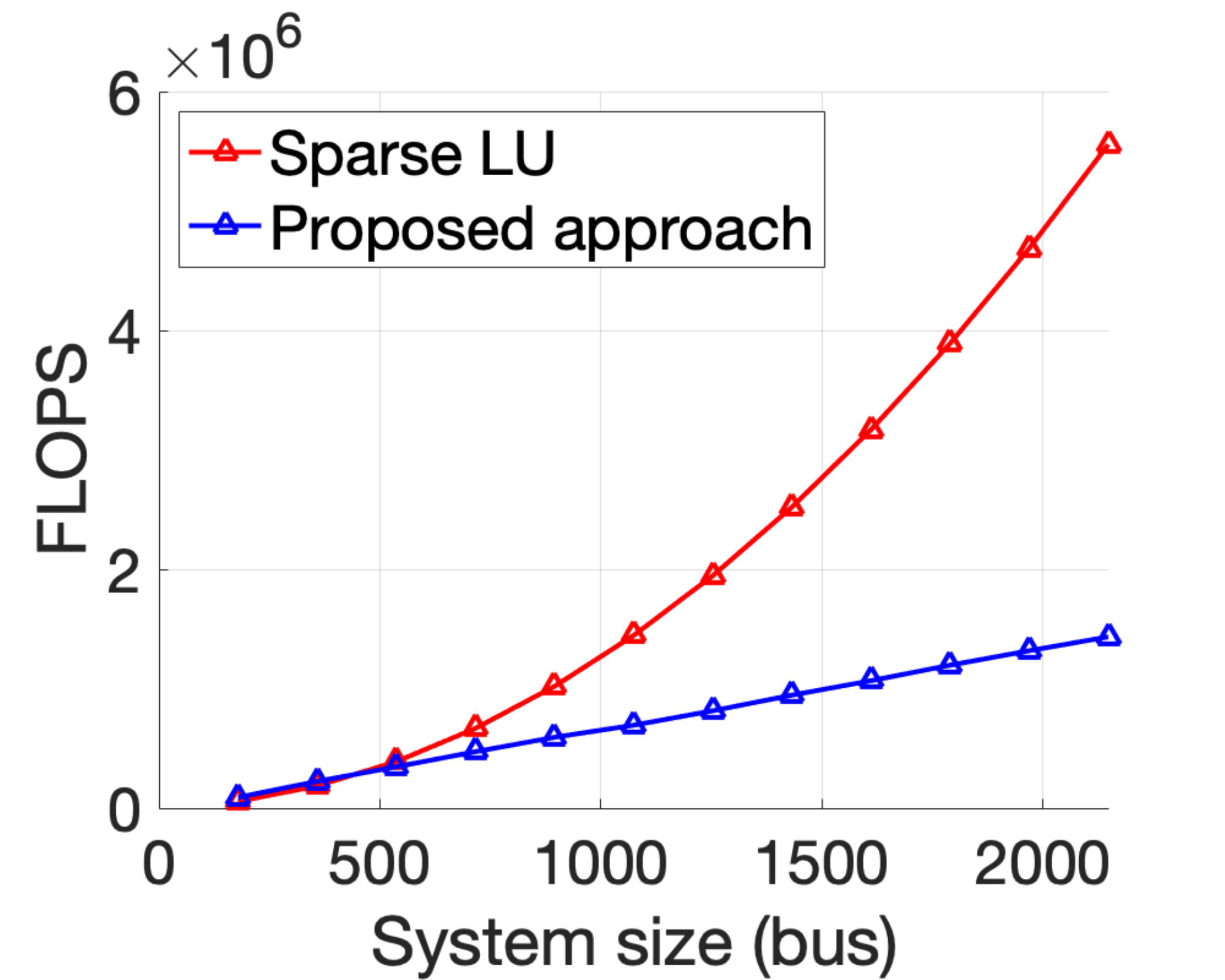}}\\
	\caption{FLOPS comparison.}\label{fig:flop}
\end{figure}

\subsection{EMT fault simulation}

In this test, a three-phase ground fault on a 179-bus system is simulated with a 20$\mu$s time step and a total simulation time length of 60ms. The fault between bus 1 and bus 81 is added at 10ms and lasts for 20ms. The fault resistance is 10 $\Omega$. The partial network with the fault is shown in Fig. \ref{fig:179sys}. 

\begin{figure}[h]
	\centering
	\includegraphics[width=0.7\columnwidth]{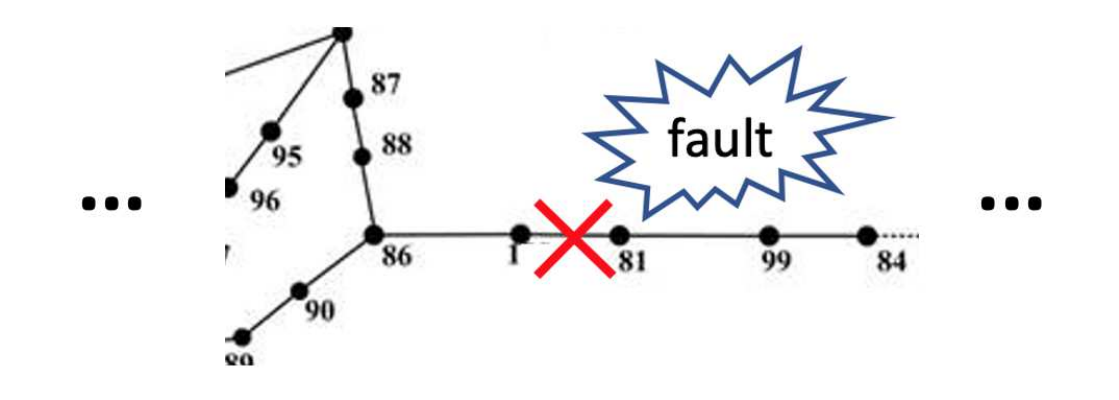}
	\caption{The 179-bus system with a fault.}
	\label{fig:179sys}
\end{figure}

\subsubsection{Accuracy}
 An LU-based network solver is used as a reference. The relative error of the bus voltage is measured at each simulation step. As shown in Fig. \ref{fig:simu} and \ref{fig:err}, the result by the proposed approach matches well with the reference. The maximal relative error of bus voltage is $7.4\times10^{-5}$.

\begin{figure}[ht]
	\centering
	\subfigure[LU-based Approach]{\label{fig:lu}
	\includegraphics[width=0.45\columnwidth]{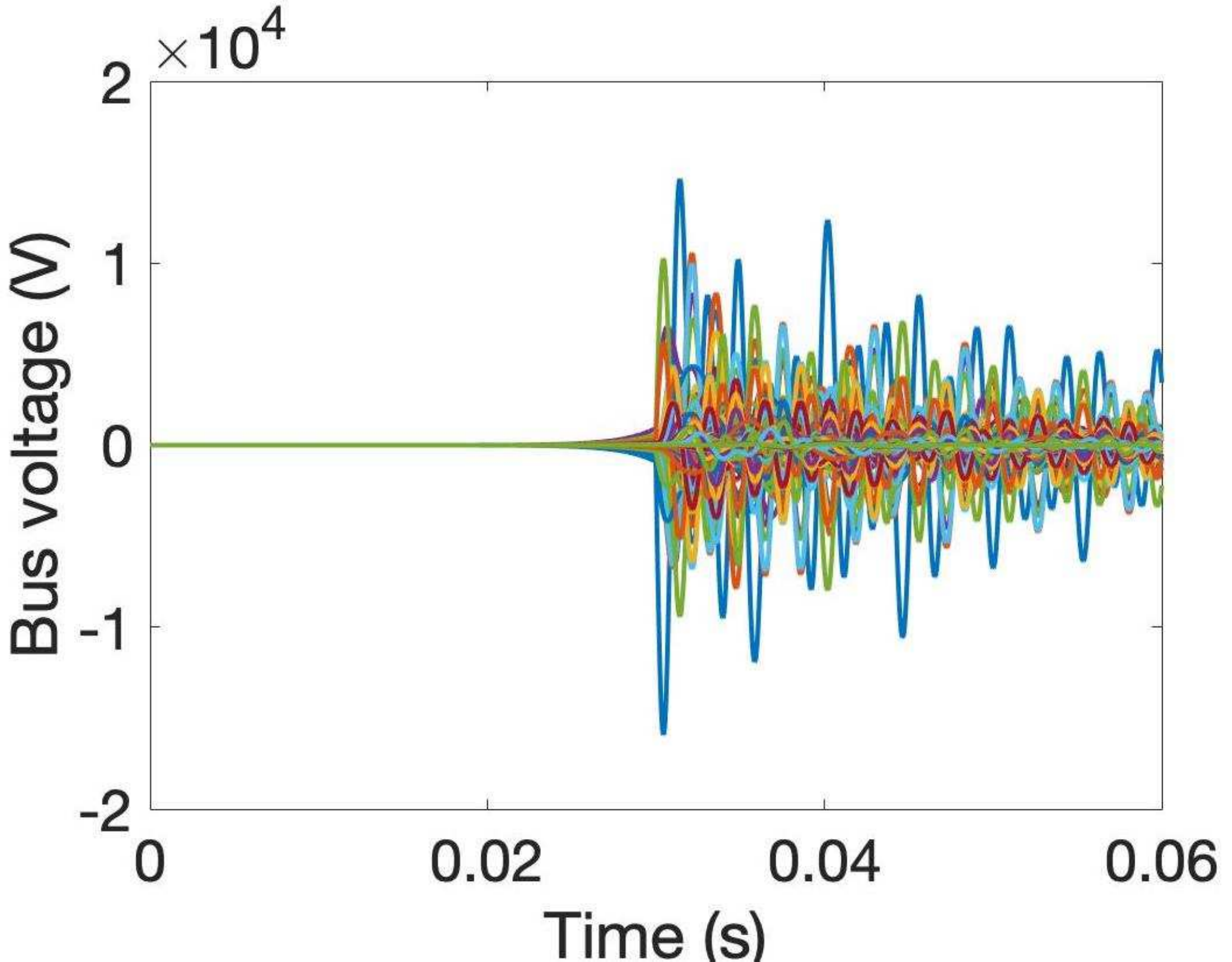}}~~~~
	\subfigure[Proposed approach]{\label{fig:propose}
	\includegraphics[width=0.45\columnwidth]{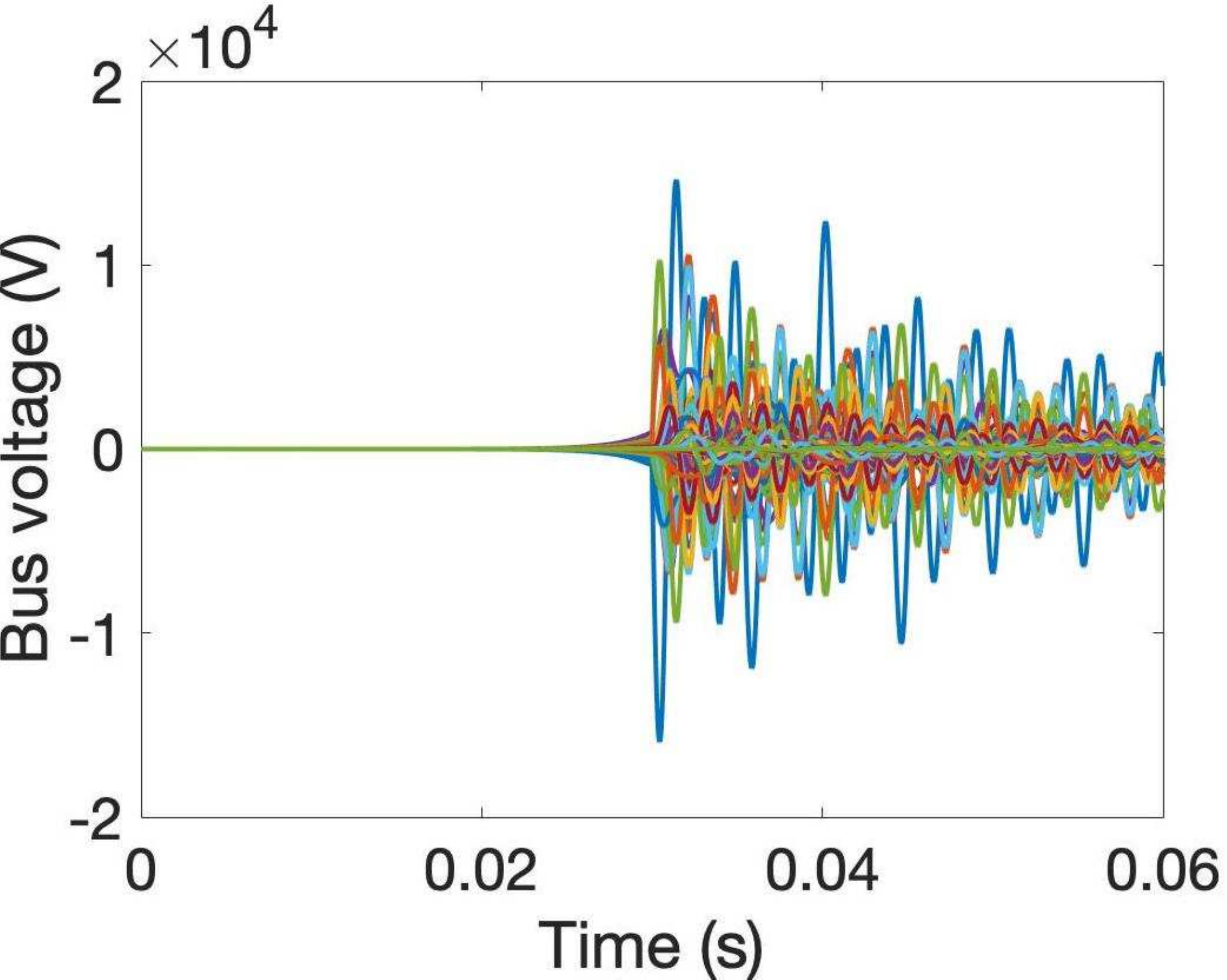}}\\
	\caption{Accuracy comparison.}\label{fig:simu}
\end{figure}

\begin{figure}[h]
	\centering
	\includegraphics[width=0.7\columnwidth]{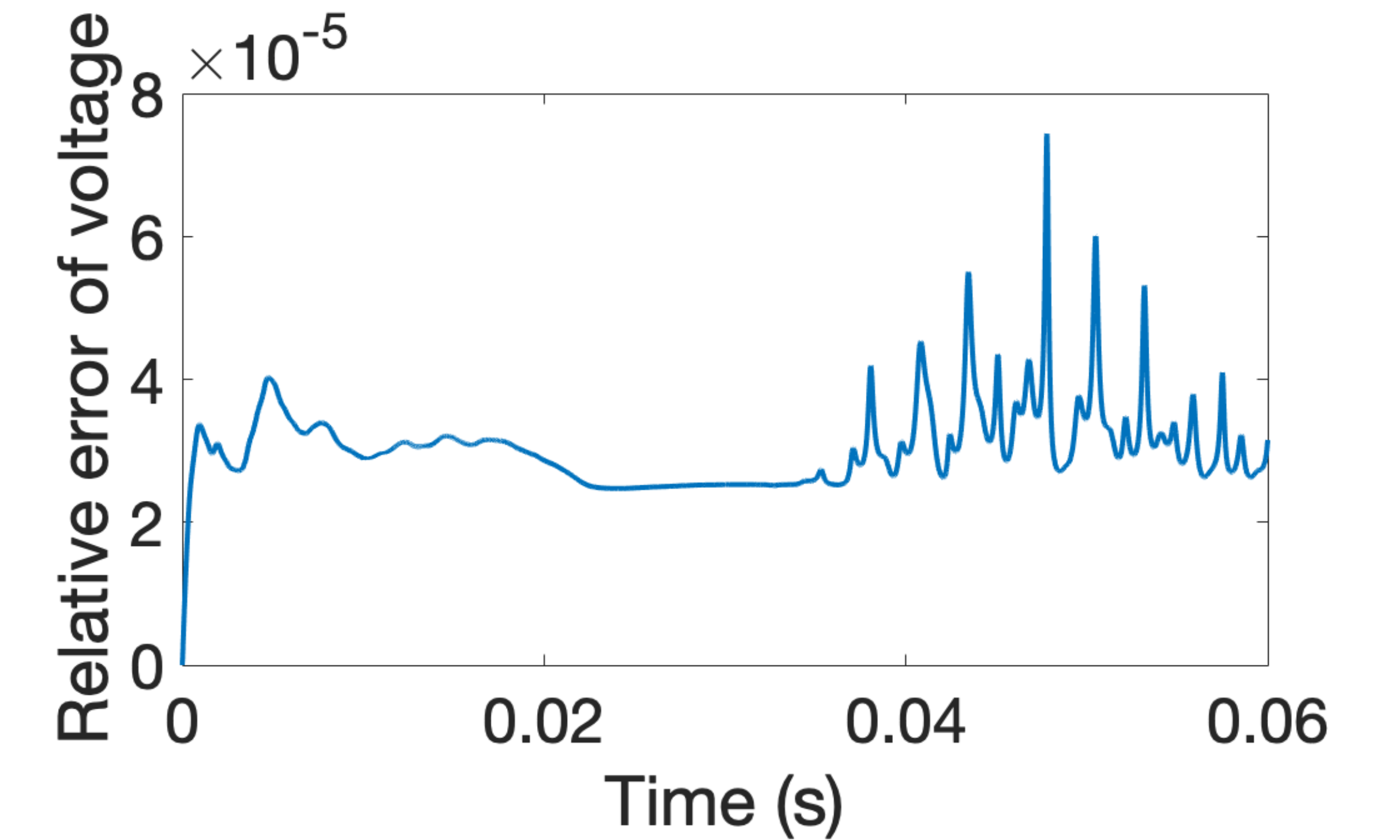} \caption{Relative error of bus voltage on 179-bus system.}
	\label{fig:err}
\end{figure}

\subsubsection{Computation cost for $G$ inverse modification}
The cost of fast $G^{-1}$ modification approach depends on the fault location. In this case, the fault occurs in the same leaf node of the partition tree. Since the node threshold is 74, the proposed approach only needs to invert the matrix with $74 \times 74$ elements instead of $179 \times 179$ elements, leading to 82.9\% reduction.




\section{Concluding Remarks}
This paper presents an inverse-based scheme to solve the power network algebraic equations. To break down the bottleneck of computing matrix inversion, this paper proposes a hierarchical approximation of $G^{-1}$. Further, a fast modification approach is presented for facilitating the contingency analysis. The performance is demonstrated by the EMT simulation on large-scale systems created from a 179-bus base case. The experiments show that the proposed approaches exhibit great advantage in speeding up the network solution with high accuracy. Future work will investigate the scalability of this approach in larger realistic systems.


\ifCLASSOPTIONcaptionsoff
  \newpage
\fi


\bibliographystyle{IEEEtran}
\bibliography{References.bib}








\end{document}